\newtheorem{prop}{Proposition}
\DeclareMathOperator{\Var}{{Var}}
\title{A statistical approach for robust tolerance design}
\author{Ambre~Diet$^{1,2}$, Nicolas~Couellan$^{2,3}$, Xavier~Gendre$^{2,4}$, Julien~Martin$^1$}
\date{}
\newcommand\blfootnote[1]{%
  \begingroup
  \renewcommand\thefootnote{}\footnote{#1}%
  \addtocounter{footnote}{-1}%
  \endgroup
}
\begin{document}
\maketitle

\blfootnote{$^1$ Tolerancing department, Airbus Operations S.A.S, 316 route de Bayonne, 31060 Toulouse, France.}
\blfootnote{$^2$ Institut de Mathématiques de Toulouse UMR 5219, Université de Toulouse, 31062 Toulouse, France.}
\blfootnote{$^3$ ENAC, Université de Toulouse, 7 avenue Édouard Belin, 31400 Toulouse, France.}
\blfootnote{$^4$ ISAE-SUPAERO, Université de Toulouse, 10 Avenue Édouard Belin, 31055 Toulouse, France.}

\begin{abstract}
Within an industrial manufacturing process, tolerancing is a key player. The dimensions uncertainties management starts during the design phase, with an assessment on variability of parts not yet produced. For one assembly step, we can gain knowledge from the tolerance range required for the parts involved. In order to assess output uncertainty of this assembly in a reliable way, this paper presents an approach based on the deviation of the sum of uniform distributions. As traditional approaches based on Hoeffding inequalities do not give accurate results when the deviation considered is small, we propose an improved upper bound. We then discuss how the stack chain geometry impacts the bound definition. Finally, we show an application of the proposed approach in tolerance design of an aircraft sub-assembly. The main interest of the technique compared to existing methodologies is the management of the confidence level and the emphasis of the explicit role of the balance within the stack chain.  
\end{abstract}

\noindent\textit{Keywords: Design, Deviation, Manufacturing, Quality, Sum of uniform distributions, Tolerance}

\paragraph{Note to Practitioners} This paper was motivated by the problem of balance between conservative approach and manufacturing cost when tolerance are allocated during design phase. Without information available about parts dimensions distributions, tolerance intervals must be restrictive enough to ensure the feasibility of assembly but also compliant with industrial capabilities and cost optimization. The main methods currently applied are the worst case approach and the statistical approach based on Gaussian assumption. We mathematically define a methodology in between to allocate tolerance, with the advantage of allowing the management of confidence level. As we assume uniform distributions on the parts dimensions, our proposal is also robust against non-normal distributions which can happen when the industrial process is not perfectly followed or when measurement are not available. Finally, we highlight the role of balance between stack chain contributors and quantify its link with the restrictiveness of tolerance interval considering the statistical result as a reference. In future research, we propose to study how to improve the method with different probability laws such as bi-modal or truncated distributions and how we can generalize the approach when there are multiple top level requirements for a part dimension.

\section{Introduction}

The management of dimensions uncertainties is a key player in the manufacturing process of various industrial sectors such as transportation (automotive, aeronautics, \dots) or household appliances industry. 

Dimensions may have some deviation from the designed value without significant impact on the quality and functional requirements of the final product. Tolerance intervals are defined according to engineering knowledge and scientific analysis in order to determine these acceptable variations.  A deviation out of the determined tolerance bounds is considered non-compliant and imply an action such as an investigation or a modification in the process or the design. 

The perfect balance between functional requirements and process capability has to be found so that the specified tolerance interval is the most accurate possible. If the tolerance is too tight, the process might not have the capability to manufacture it and either there will be many rejected items either some costly improvement will be needed to produce compliant items. Otherwise, a too wide tolerance will lead to non-conformity with functional requirements of the final product and may lower the aircraft performance. 
As there are often several steps in a manufacturing process, the propagation of uncertainty has also to be taken into account to specify the tolerance interval of following assembly steps.

 In this paper, we focus on the tolerances allocation during the design phase in which tolerancing activity do not only aim at anticipating the margins of uncertainty but also help in predicting their effects on the various assembly steps. These involve different physical characteristics of parts, such as part length, hole position, pin, \dots, called features. All tolerancing issues and notations are detailed in the engineering drawing and related documentation pratices \cite{ASME Y14} and \cite{ISO 1101}. \\

In our case, there are no available dimensions measurements because we focus on tolerance allocation in the design phase of a product prior production. Considering one specific assembly stage, one of the main concern is to assess the variability of an output feature of the assembly knowing the tolerance range of the input features. In Figure~\ref{ex1}, input features are the lengths of the different parts and the output feature is the gap of interest in this assembly.

\begin{figure}[H]
\centering
\includegraphics[scale=0.57]{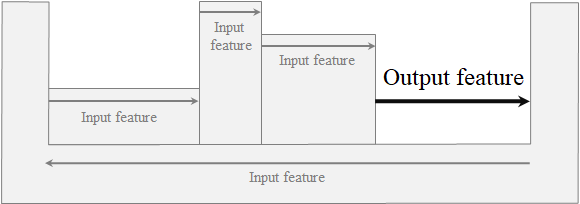}
\caption{An assembly example : inputs and output features identification.}
\label{ex1}
\end{figure}

 In the design phase, all tolerance intervals are assumed to be centered around the nominal dimension. To determine the variation around this nominal dimension, there are two main methods detailed in \cite{CHASE 88} : Worst Case and statistical approaches. 

Worst Case approach is to consider all assembly parts delivered at their worst acceptable value (assembly output tolerance equals the sum of the input tolerances). Statistical approach, also called RSS approach (square Root of the Sum of Squares), gives a result assuming all input features are normally distributed (assembly output tolerance equals the square root of the sum of squares of input features tolerances). Statistical result gives a much tighter tolerance range result than the worst case approach, but it does not hedge against the case where input are not reasonably close to their nominal value. To find a balance between this two approaches, Bender \cite{BENDER 62} proposed to multiply the statistical result by an empiric coefficient of 1.5 to obtain an inflated statistical result which is supposed to give a result tighter than the worst case but more conservative than RSS. However, this technique does not apply if number of assembly inputs is low as it gives a wider result than the worst case approach. Several other statistical methodologies have been studied to obtain the best trade-off between worst case and statistical approaches. For instance, Skowronski and Turner \cite{SKO TURNER 97} proposed a method relying on Monte Carlo techniques. Choi \textit{et al.} \cite{CHOI 00} studied an approach based on Taguchi's method requiring the definition a quadratic loss function. The tolerance allocation problem is formulated as a minimization of the sum of machining cost and quality loss. Manufacturing cost considerations for tolerance allocation is beyond the scope of this article. Pillet \textit{et al.} \cite{PILLET 05} proposed to consider weighted inertial tolerancing. Inertial tolerancing works with mean square deviation (inertia) of the output feature as limit instead of considering a tolerance interval. Then, they applied a weighting system based on the number of assembly inputs to obtain a reasonable tolerance result. An other approach has been studied in \cite{LEBLOND PILLET 18} by taking an interest in the meaning of the conformity. Instead of limiting the assembly output variability, they propose a formal definition of statistical conformity that does not apply individually to a part but to a part population.

Note that tolerance intervals are highly related to the assemblies processes capabilities. Even if suppliers process capability indicators should be monitored as detailed in \cite{DRAKE 99}, the normality of features distributions can not always be verified.

One of the objective of the tolerancing is to assess the same confidence in a tolerance interval whatever the distribution of inputs are, as long as these inputs are delivered within the claimed tolerance range. Indeed, suppliers of parts receive a nominal value and two dimension limits. They are also required to follow a target distribution, however checking this compliance is difficult in practice. At the design stage, it is impossible to characterize the features distributions from measurement data. Uniform distribution is a better option to hedge against less favorable distributions of suppliers values.

We propose a mathematical tool to define an accurate assembly output tolerance range considering uniform input features. 

The paper is organized as follows: statistical framework is introduced in Section~\ref{stat}, main results are presented in Section~\ref{main}: First part is devoted to traditional approach on deviations and following parts detail improvement on the upper bound accuracy and balance term introduction. In Section~\ref{appli}, we carry out a simulation study in order to represent and compare our results. Finally, an example on airframe assembly with real inputs data is performed.  

\section{Statistical framework} \label{stat}

Consider a set of input features $X_1, \ldots, X_n \in \mathbb{R}$ and an output feature $Y \in \mathbb{R}$. All input features are assumed to be independent random variables for the reason that assembly parts are supposed to be separately produced.  We are interested here in the variability of the output $Y$ and especially in a way to define a tolerance range for this feature. \\

Each input feature is assumed to be centered around a nominal dimension and has its own variability characterized by its tolerance range $[-v_i,v_i], \forall i \in \{1,\ldots,n\}$ where $v_1, \ldots, v_n > 0$ are the tolerance bounds. This variability reflects the uncertainty linked to the process (temperature, control plan, ground motion, delivery types, \dots). \\

In order to discuss about the feature $Y$, assembly step must be modeled to represent the link between inputs and output of the assembly. A common approach in tolerancing is the linear coefficient model. If the variations are supposed to be small around the nominal dimension, the linear approach is appropriate.

Based on the knowledge of inputs tolerances and influence coefficients on the output, output result is seen as a linear combination of all inputs weighted by known influence coefficients (previously determined with a 3D CAD tool and only linked to the assembly geometry). For ease of notations, we will directly treat the weighted features, meaning that input features $X_1, \ldots, X_n$ are already multiplied by their respective influence coefficients. In this formalism, 
\begin{equation*}
Y =\sum_{i=1}^n X_i.
\end{equation*}

 For a given confidence level $\rho$, the aim is to determine the associated tolerance interval $[-t,t]$ for the output feature $Y$, verifying 
\begin{equation*}
\mathbb{P}(\left| Y \right| \geqslant t) \leqslant \rho.
\end{equation*}

The tolerance interval is determined based on the distribution of the output feature which depends on input features distributions.\\

 A popular practice is to consider all features as Gaussian which leads to a Gaussian output feature. By applying the commonly used $6\sigma$ methodology, the confidence level is $\rho= 0.0027$. In Gaussian framework, the result is $t= 3\sigma_Y$ with $\sigma_Y$ the standard deviation of the feature $Y$. \\

Within the Gaussian framework, the $6\sigma$ methodology gives the standard deviation of each input feature :  $v_i/3, \forall i \in {1,\ldots,n}$. As input features are assumed independent, the standard deviation of the output feature in the Gaussian case is $\frac{1}{3} \sqrt{\sum_{i=1}^n v_i^2}$. Again, the $6\sigma$ methodology leads to the interval $[-T_{RSS}, T_{RSS}]$ for the output feature tolerance, where $T_{RSS} = \sqrt{\sum_{i=1}^n v_i^2}$. This tolerance interval is commonly called the statistical result or RSS (Root Sum of Squared) result by the tolerancing community. However, as we consider tolerance allocation in the design phase, Gaussian assumption can not be verified from measurement data on features. Only tolerances bounds of input features $v_1, \ldots, v_n$ are available.

We choose to consider input features as uniform random variables, since it is the least informative available distribution given our knowledge about inputs. The purpose is to characterize the deviation of the sum of uniform independent random variables.
Killmann and Von Collani \cite{KILLMANN 01} studied the distribution of the sum of uniform features. Their idea was to explicitly calculate density of the sum but such a closed form is numerically intractable and therefore not suited to our context. 

Note that our objective is to focus on the quantile of the distribution of $Y$ ensuring a given probability $\rho$ to be out of tolerance. This probability value is fixed in our framework and we do not plan to address the tolerancing problem uniformly according to $\rho$. This is the point of view of the field of \textit{optimal transport} as developed in \cite{VILLANI 03} but controlling distribution tails leads to poor results in practice for reasonable values of $\rho$.

 In the uniform case, input features standard deviations are now $v_i/\sqrt{3}, \forall i \in {1,\ldots,n},$ and the standard deviation of the output feature is $\frac{1}{\sqrt{3}}\sqrt{\sum_{i=1}^n v_i^2}$. In this case, the $6\sigma$ methodology is applied with standard deviations of uniform distributions and the output feature tolerance interval would be $[-\sqrt{3} \times T_{RSS},\sqrt{3} \times T_{RSS}]$.

The coefficient $\sqrt{3}$  is an accurate coverage factor on the statistical result if the $v_1, \dots, v_n$ are all equal but it does not address the case where they are unbalanced. Yet, if one of the feature predominates over others, for a same confidence level the output tolerance interval should be tighter, as shown in the Figure~\ref{balance}. 

\begin{figure}[H]
\centering
\includegraphics[scale=0.39]{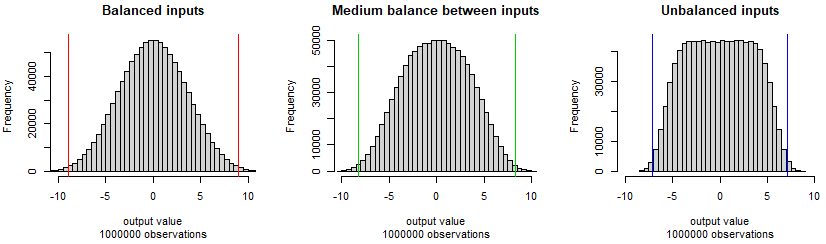}
\caption{Output feature distribution for several balance ratio of inputs.}
\label{balance}
\end{figure}

The aim of our approach is to introduce a shape coefficient in order to correct the RSS interval result assuming the input distributions are uniforms. This coefficient will depend on how unbalanced input features are and also on the selected confidence level $\rho$. The value $\sqrt{3}$ for this shape coefficient means that $v_1 = v_2 = \dots = v_n$. The more input features are unbalanced, the lower the form coefficient value is.

Next, we will focus on the role of this coefficient and its impact on the probability $\rho$.

\section{Main results} \label{main}

If we consider Gaussian independent input features with a tolerance interval $[-v_i,v_i]$, $\forall i \in \{1, \ldots,n\}, $ the associated standard deviation from the $6\sigma$ methodology is $v_i/3$, $\forall i \in \{1, \ldots,n\}$. If the features are denoted $N_i$ and $N_i \sim \mathcal{N}(0, v_i/3)$, then $\forall i \in {1, \ldots,n}$, the standard Gaussian deviation inequality gives

\begin{equation*}
\mathbb{P}\left( \left| \sum_{i=1}^n N_i  \right| \geqslant t \right) \leqslant 2 \exp\left(- \frac{t^2}{2 \sum_{i=1}^n \left( \frac{ v_i}{3}\right)^2} \right)
\end{equation*}
and then
\begin{equation*}
\mathbb{P}\left( \left| \sum_{i=1}^n N_i  \right| \geqslant \frac{1}{3} \sqrt{2 log \left( \frac{2}{\rho} \right) \sum_{i=1}^n v_i^2 } \right) \leqslant \rho
\end{equation*}
that is equivalent to
\begin{equation*}
\mathbb{P}\left( \left| \sum_{i=1}^n N_i  \right| \geqslant l_{\rho} \times T_{RSS}  \right) \leqslant \rho
\end{equation*}
with
\begin{equation*}
l_{\rho} = \frac{1}{3} \sqrt{ 2  log \left( \frac{2}{\rho} \right) }.
\end{equation*}

For fixed $\rho$ and independent uniform input features $U_i \sim \mathcal{U}([-v_i,v_i])$, $\forall i \in {1, \ldots,n} $, our aim is to determine $f$ such that 

\begin{equation} \label{eq:unif}
\mathbb{P}\left( \left| \sum_{i=1}^n U_i  \right| \geqslant f   \times l_{\rho} \times T_{RSS}  \right) \leqslant \rho.
\end{equation}

\subsection{Hoeffding approach for the deviation of a sum of bounded random variables} 

Traditional approaches based on deviations are related to the Hoeffding inequality which provides an upper bound on the probability that the sum of bounded independent random variables deviates more than a certain amount. 

As detailed in \cite{HOEFFDING 63} and \cite{BOUCHERON 13}, this inequality applied to the sum of uniform independent random variables $Y =  \sum_{i=1}^n X_i $  gives a non asymptotic upper bound for the probability of deviation. This result is summarized in the Proposition~\ref{prop1}.

\begin{prop}
\label{prop1}
Let $v_1, \dots, v_n > 0$,
if $X_1, \dots, X_n$ are independent random variables such that
\begin{equation*}
\forall i \in \{1, \dots, n\},\ |X_i| \leqslant v_i, \ a.s.
\end{equation*}
then,
\begin{equation*}
\forall t > 0,
\mathbb{P}\left( \left| \sum_{i=1}^n X_i \right| \geqslant t \right) \leqslant 2 \exp\left(- \frac{t^2}{2\sum_{i=1}^n v_i^2} \right).
\end{equation*}
\end{prop}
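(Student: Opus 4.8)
The plan is to prove Proposition~\ref{prop1} via the standard Cramér--Chernoff method combined with Hoeffding's lemma. First I would recall Hoeffding's lemma: if $Z$ is a random variable with $\mathbb{E}[Z]=0$ and $Z \in [a,b]$ almost surely, then for every $\lambda \in \mathbb{R}$ one has $\mathbb{E}[e^{\lambda Z}] \leqslant \exp(\lambda^2 (b-a)^2/8)$. In our setting the hypothesis $|X_i| \leqslant v_i$ almost surely does \emph{not} assume $\mathbb{E}[X_i]=0$; however, since we only control $\mathbb{P}(|\sum X_i| \geqslant t)$, I would first reduce to the centered case by replacing $X_i$ with $X_i - \mathbb{E}[X_i]$. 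Writing $S = \sum_{i=1}^n X_i$ and $\mu = \mathbb{E}[S]$, note $S - \mu = \sum_i (X_i - \mathbb{E}[X_i])$ and each centered variable lies in an interval of length at most $2v_i$, so Hoeffding's lemma gives $\mathbb{E}[e^{\lambda(X_i - \mathbb{E}[X_i])}] \leqslant \exp(\lambda^2 (2v_i)^2/8) = \exp(\lambda^2 v_i^2/2)$.

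Next I would handle the one-sided deviation: by Markov's inequality applied to $e^{\lambda(S-\mu)}$ with $\lambda > 0$, and using independence to factor the MGF,
\begin{equation*}
\mathbb{P}(S - \mu \geqslant t) \leqslant e^{-\lambda t} \prod_{i=1}^n \mathbb{E}\!\left[e^{\lambda(X_i - \mathbb{E}[X_i])}\right] \leqslant \exp\!\left(-\lambda t + \frac{\lambda^2}{2}\sum_{i=1}^n v_i^2\right).
\end{equation*}
Optimizing the quadratic-in-$\lambda$ exponent by taking $\lambda = t / \sum_i v_i^2$ yields $\mathbb{P}(S - \mu \geqslant t) \leqslant \exp(-t^2 / (2\sum_i v_i^2))$. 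Applying the same argument to $-(S-\mu)$ gives the matching lower-tail bound, and a union bound combines them into $\mathbb{P}(|S - \mu| \geqslant t) \leqslant 2\exp(-t^2/(2\sum_i v_i^2))$.

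The final step is to pass from $|S - \mu|$ back to $|S|$. Here I would use that in the paper's framework the relevant inputs are symmetric about zero (uniform on $[-v_i,v_i]$, or Gaussian centered), so $\mu = \mathbb{E}[S] = 0$ and the bound on $\mathbb{P}(|S - \mu| \geqslant t)$ is directly a bound on $\mathbb{P}(|S| \geqslant t)$; more generally one notes that the stated inequality as written is really the content for centered variables, which is the case of interest. The main obstacle is thus essentially bookkeeping rather than a deep difficulty: making sure the centering is handled cleanly and that the factor in Hoeffding's lemma is applied with the correct interval length $2v_i$ (which produces the $v_i^2$ in the denominator with the right constant $2$, rather than, say, $(2v_i)^2$). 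If one prefers to avoid invoking Hoeffding's lemma as a black box, the alternative is to bound $\mathbb{E}[e^{\lambda X_i}]$ directly for $X_i \in [-v_i,v_i]$ by convexity of $x \mapsto e^{\lambda x}$ on the interval and a short calculation on the resulting function of $\lambda$; I would mention this but not carry it out, since it is the classical proof of Hoeffding's lemma itself.
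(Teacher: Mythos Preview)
Your argument is correct and is precisely the standard Cram\'er--Chernoff plus Hoeffding's lemma proof. The paper does not actually give a proof: it simply refers the reader to Section~2.6 of Boucheron--Lugosi--Massart, where essentially your argument appears. So there is no methodological difference to compare.

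One point worth underlining is that you are right to flag the centering issue. As literally stated, Proposition~\ref{prop1} is false without the assumption $\mathbb{E}[X_i]=0$: for instance with $n=2$ and $X_1=X_2=v$ almost surely, taking $t=2v$ gives left-hand side $1$ but right-hand side $2e^{-1}<1$. Your resolution---observing that in the paper's setting the variables are symmetric about zero so $\mu=0$---is exactly the correct reading; the proposition should be understood as a statement about centered bounded variables, which is what the cited reference proves.
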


\begin{proof}
See Section 2.6 in \cite{BOUCHERON 13}.
\end{proof}

Setting $t=\sqrt{2\log \left(\frac{2}{\rho}\right)\sum_{i=1}^n v_i^2 }$ leads to
\begin{equation*}
\mathbb{P}\left( \sum_{i=1}^n X_i \geqslant \sqrt{2\log \left(\frac{2}{\rho}\right) \sum_{i=1}^n v_i^2 } \right) \leqslant \rho,
\end{equation*}
and taking $f=3$ in \eqref{eq:unif}, we have 
\begin{equation*}
\mathbb{P}\left( \sum_{i=1}^n X_i \geqslant 3 \times l_{\rho} \times T_{RSS} \right) \leqslant \rho.
\end{equation*}

Hoeffding approach only takes into account the fact that random variables are bounded. However, in our case we also have the information that features are uniform random variables. We will use this information to find a tighter upper bound for the deviation of a  sum of uniform random variables. As a result, we will obtain a lower value for the coefficient $f$ we are looking for.

\subsection{Chernov approach to improve the bound for a sum of uniform random variables}

The aim is to find a more accurate upper bound than the previous one coming from the Hoeffding approach. We state and prove the following proposition:

\begin{prop}
\label{prop2}
Let $v_1, \dots, v_n > 0$,
if $X_1, \dots, X_n$ are independent random variables such that
\begin{equation*}
\forall i \in \{1, \dots, n\},\ X_i \sim \mathcal{U}([-v_i,v_i]),
\end{equation*}
then,
\begin{equation*}
\forall t > 0,
\mathbb{P}\left( \left| \sum_{i=1}^n X_i \right| \geqslant t \right) \leqslant 2 \inf_{\lambda>0} \left\{ \exp \left( \phi(\lambda,t) \right) \right\}
\end{equation*}

where the function $\phi$ is defined for any $\lambda, t> 0$ by 

\begin{equation}
\label{eq:phi}
\phi(\lambda,t)  =  \sum_{i=1}^n \log\left( \frac{e^{\lambda v_i}-e^{-\lambda v_i}}{2 \lambda v_i} \right)  - \lambda t.
\end{equation}

\end{prop}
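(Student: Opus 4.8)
The plan is to apply the Chernoff bounding technique (the exponential Markov inequality optimized over the free parameter $\lambda$), exploiting the explicit form of the moment generating function of a uniform random variable. First I would reduce the two-sided probability to a one-sided one: by symmetry of each $X_i$ (and hence of the sum), $\mathbb{P}(|\sum_i X_i| \geqslant t) \leqslant 2\, \mathbb{P}(\sum_i X_i \geqslant t)$, so it suffices to bound the right tail. Then, for any $\lambda > 0$, Markov's inequality applied to the nonnegative random variable $\exp(\lambda \sum_i X_i)$ gives
\begin{equation*}
\mathbb{P}\left( \sum_{i=1}^n X_i \geqslant t \right) \leqslant e^{-\lambda t}\, \mathbb{E}\left[ \exp\left(\lambda \sum_{i=1}^n X_i\right) \right] = e^{-\lambda t} \prod_{i=1}^n \mathbb{E}\left[ e^{\lambda X_i} \right],
\end{equation*}
where the last equality uses independence of the $X_i$.

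Next I would compute the moment generating function of $X_i \sim \mathcal{U}([-v_i, v_i])$ explicitly: a direct integration gives $\mathbb{E}[e^{\lambda X_i}] = \frac{1}{2v_i} \int_{-v_i}^{v_i} e^{\lambda x}\, dx = \frac{e^{\lambda v_i} - e^{-\lambda v_i}}{2\lambda v_i}$, valid for $\lambda > 0$ (and extending continuously to $1$ at $\lambda = 0$). Substituting this into the product and taking logarithms turns the bound into $\mathbb{P}(\sum_i X_i \geqslant t) \leqslant \exp\!\big(\sum_{i=1}^n \log\frac{e^{\lambda v_i} - e^{-\lambda v_i}}{2\lambda v_i} - \lambda t\big) = \exp(\phi(\lambda, t))$, with $\phi$ exactly as defined in~\eqref{eq:phi}. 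Since this holds for every $\lambda > 0$, I can take the infimum over $\lambda > 0$ on the right-hand side, and combining with the symmetrization factor of $2$ yields the claimed inequality $\mathbb{P}(|\sum_i X_i| \geqslant t) \leqslant 2 \inf_{\lambda > 0} \exp(\phi(\lambda, t))$.

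There is no serious obstacle here; the argument is the textbook Chernoff method, and the only points requiring a little care are the symmetrization step (justifying the factor $2$ via the symmetry of the uniform law, so that the left and right tails have equal probability) and the explicit evaluation of the uniform MGF, including noting its positivity so that the logarithm is well defined for all $\lambda > 0$ and all $v_i > 0$. I would also remark that, unlike the Hoeffding bound of Proposition~\ref{prop1} which replaces $\mathbb{E}[e^{\lambda X_i}]$ by the cruder sub-Gaussian surrogate $e^{\lambda^2 v_i^2 / 2}$, here we keep the exact MGF, which is what makes the resulting bound tighter and ultimately produces a coefficient $f < 3$ in~\eqref{eq:unif}.
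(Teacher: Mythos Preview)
Your proposal is correct and follows essentially the same route as the paper: Markov's inequality applied to $e^{\lambda\sum_i X_i}$, the symmetry of the uniform law to obtain the factor $2$, and optimization over $\lambda>0$. You are simply more explicit than the paper in writing out the factorization of the moment generating function via independence and its closed form for the uniform distribution.
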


\begin{proof}
Let $\lambda>0$, applying Markov inequality to the positive random variable $\exp \left( \lambda \sum_{i=1}^n X_i  \right) $ gives the following upper bound on the probability that the sum of uniform independent random variables deviates more than $t>0$

\begin{equation*}
\mathbb{P}\left(\sum_{i=1}^n X_i  \geqslant t \right) \leqslant  \frac{\mathbb{E}[e^{ \lambda \sum_{i=1}^n X_i}]}{e^{\lambda t}}.
\end{equation*}

Using the symmetry of the uniform distribution, we have for any $t > 0$

\begin{equation*}
\mathbb{P}\left( \left|\sum_{i=1}^n X_i \right| \geqslant t \right) \leqslant 2 \exp \left( \phi(\lambda,t) \right).
\end{equation*}

The upper bound is valid for any value of $\lambda>0$ and the announced result follows by taking the infimum according to $\lambda$.

\end{proof}

The optimization of $\phi$ function with respect to $\lambda$ is highly related to the input features balance. Next, we detail how to characterize this dependency.

\subsection{Dependency on the features balance}

The aim of this section is to provide details on how to determine the value of $\lambda$ which is obtained from a function minimization in the upper bound previously presented.

A concept of balance between input features is introduced. This balance represents the discrepancy between the uniform distributions  parameters : if all uniform random variables have the same parameters, it means a perfect balance between tolerance bounds. Otherwise, one of the random variables within the sum may have a much larger support set than others and it leads to imbalance between tolerance bounds.

We take the upper bound result from Proposition~\eqref{prop2}. The idea is to bound from above this result by introducing a specific term that identifies the influence of the balance within $v_1, \dots, v_n$. 

We focus on the sum of logarithms in the function $\phi$ given in \eqref{eq:phi} that can be rewritten as
\begin{align}
 \sum_{i=1}^n \log\left( \frac{e^{\lambda v_i}-e^{-\lambda v_i}}{2 \lambda v_i} \right)  &=    \lambda  \sum_{i=1}^n v_i  + \sum_{i=1}^n \log\left( \frac{1-e^{-2\lambda v_i}}{2 \lambda v_i} \right)\nonumber \\
&=   n\log\left( \frac{1-e^{-2\lambda \bar{v}}}{2 \lambda \bar{v}}  \right) + S_{\lambda}
\label{S_intro}
\end{align}
with $S_{\lambda}$ defined as follows
\begin{equation*}
S_{\lambda} =\sum_{i=1}^n \left( \log \left( \frac{1-e^{-2\lambda v_i}}{2 \lambda v_i} \right) -  \log \left( \frac{1-e^{-2\lambda \bar{v}}}{2 \lambda \bar{v}} \right)  \right).
\end{equation*}

The term $S_{\lambda}$ quantifies the imbalance between uniform distributions parameters. In the next two propositions, we propose results about the upper bound on the probability that the sum of uniform independent random variables deviates from its expected value.

\begin{prop}
\label{prop_lipschitz}
Let $v_1, \dots, v_n > 0$, we introduce 
\begin{equation*}
\bar{v} = \frac{1}{n} \sum_{i=1}^n v_i .
\end{equation*}
 If $X_1, \dots, X_n$ are independent random variables such that $\forall i \in \{1, \dots, n\}$,$\ X_i \sim \mathcal{U}([-v_i,v_i])$ then,
\begin{equation*}
\forall t > 0,\ \mathbb{P}\left( \left| \sum_{i=1}^n X_i \right| \geqslant t \right)  \leqslant   2 \exp \left( \psi(\lambda_0,t) \right).
\end{equation*}
where for any $\lambda, t > 0$
\begin{equation*}
\psi(\lambda,t) =
-\lambda t 
+ \lambda n \bar{v}  
+ n \log \left( \frac{1- e^{-2\lambda \bar{v}}}{2 \lambda \bar{v}} \right) 
+  \lambda  \sum_{i=1}^n \left|  v_i - \bar{v}  \right|
\end{equation*}
and $\lambda_0$ is such that
\begin{equation*}
\frac{ \partial   \psi(\lambda_0,t)}{ \partial \lambda} = 0
\end{equation*}
\end{prop}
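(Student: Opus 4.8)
\textit{Proof plan.} The plan is to derive the bound directly from Proposition~\ref{prop2} by controlling the imbalance term $S_\lambda$ that appears in the decomposition~\eqref{S_intro}. Proposition~\ref{prop2} gives $\mathbb{P}\left(\left|\sum_{i=1}^n X_i\right| \geqslant t\right) \leqslant 2\inf_{\lambda>0}\exp(\phi(\lambda,t))$, and by~\eqref{S_intro} one may write $\phi(\lambda,t) = -\lambda t + \lambda n\bar v + n\log\!\left(\frac{1-e^{-2\lambda\bar v}}{2\lambda\bar v}\right) + S_\lambda$. Comparing with the definition of $\psi$, it therefore suffices to establish the single inequality
\[
S_\lambda \;\leqslant\; \lambda \sum_{i=1}^n |v_i - \bar v| \qquad \text{for every } \lambda > 0,
\]
since then $\phi(\lambda,t) \leqslant \psi(\lambda,t)$ for all $\lambda$, hence $\mathbb{P}\left(\left|\sum_{i=1}^n X_i\right| \geqslant t\right) \leqslant 2\inf_{\lambda>0}\exp(\psi(\lambda,t)) \leqslant 2\exp(\psi(\lambda_0,t))$ for any $\lambda_0$ with $\partial_\lambda\psi(\lambda_0,t)=0$.

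The core of the argument is a Lipschitz estimate. Fix $\lambda>0$ and set $g_\lambda(v) = \log\!\left(\frac{1-e^{-2\lambda v}}{2\lambda v}\right)$ for $v>0$, so that $S_\lambda = \sum_{i=1}^n\big(g_\lambda(v_i) - g_\lambda(\bar v)\big)$. It is enough to show $g_\lambda$ is $\lambda$-Lipschitz on $(0,\infty)$, because then $S_\lambda \leqslant \sum_i |g_\lambda(v_i)-g_\lambda(\bar v)| \leqslant \lambda\sum_i |v_i-\bar v|$. With the substitution $u = 2\lambda v$ one has $g_\lambda(v) = h(u)$ where $h(u) = \log\!\left(\frac{1-e^{-u}}{u}\right)$, so $g_\lambda'(v) = 2\lambda\, h'(u)$ and the claim reduces to $|h'(u)| \leqslant \tfrac12$ for all $u>0$, where $h'(u) = \frac{1}{e^u-1} - \frac1u$. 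Since $e^u - 1 > u$ for $u>0$ we get $h'(u) < 0$; the lower bound $h'(u) > -\tfrac12$ becomes, after clearing the (positive) denominators, $u(e^u+1) > 2(e^u-1)$, i.e. $u > 2\tanh(u/2)$, which is just the elementary inequality $\tanh x < x$ for $x>0$. Hence $|h'(u)| < \tfrac12$, so $|g_\lambda'| < \lambda$ and $g_\lambda$ is $\lambda$-Lipschitz, which is exactly what is needed.

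Finally I would record that $\psi(\cdot,t)$ is convex on $(0,\infty)$: its linear terms are affine and $u\mapsto h(u)$ is convex because $h''(u) = \frac{1}{u^2} - \frac{e^u}{(e^u-1)^2} \geqslant 0$, an inequality equivalent to $2\sinh(u/2) \geqslant u$. Consequently, when a stationary point $\lambda_0$ of $\psi(\cdot,t)$ exists it is the global minimizer, so $\psi(\lambda_0,t) = \inf_{\lambda>0}\psi(\lambda,t)$ and the stated estimate is the sharpest one this family produces; in any event the inequality $\mathbb{P}\left(\left|\sum_{i=1}^n X_i\right| \geqslant t\right) \leqslant 2\exp(\psi(\lambda_0,t))$ holds as soon as $\partial_\lambda\psi(\lambda_0,t)=0$. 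The main obstacle is the Lipschitz step, and more precisely the uniform bound $h'(u) > -\tfrac12$; but this ultimately collapses to the classical inequality $\tanh x < x$, and the rest is bookkeeping with the decomposition~\eqref{S_intro}.
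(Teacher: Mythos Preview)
Your proof is correct and follows essentially the same route as the paper: both reduce the result to the bound $S_\lambda \leqslant \lambda\sum_i|v_i-\bar v|$ via the $\tfrac12$-Lipschitz continuity of $h(u)=\log\bigl((1-e^{-u})/u\bigr)$, applied at $u=2\lambda v_i$ and $u=2\lambda\bar v$. The only differences are cosmetic: the paper establishes $|h'|\leqslant\tfrac12$ by showing $h''\geqslant 0$ and reading off the limits of $h'$ at $0^+$ and $+\infty$, whereas you obtain the same bound directly from $e^u-1>u$ and $\tanh x<x$; your additional convexity remark about $\psi(\cdot,t)$ is a welcome clarification of the role of $\lambda_0$ but is not present in the paper's argument.
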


For a set of tolerance bounds $v_1, \dots, v_n > 0$ and a fixed probability $\rho$, $t$ verifies $  \psi(\lambda_0,t) =\rho$ . The value of interest $t$ is obtained by inversion with respect to $t$ of the function $\psi$. With this expression, the balance within $v_1, \dots, v_n$ appears via $ \sum_{i=1}^n \left|  v_i - \bar{v}  \right| $. Indeed, this term is large for unbalanced values  $v_1, \dots, v_n$ and small otherwise. Next, we prove Propostion~\ref{prop_lipschitz}.

\begin{proof}
Let us define the function $h$ given by

\begin{equation*}
\forall x > 0, \quad 
h(x) = \log \left( \frac{1-e^{-x}}{x} \right).
\end{equation*}

This function is $\frac{1}{2}$-Lipschitz continuous (proof is postponed in the appendix) and therefore we can write

\begin{equation}
\forall x, y > 0, \quad
\left| h(x) - h(y) \right| \leqslant \frac{1}{2} \left| x - y \right|.
\label{lip}
\end{equation}

We apply this inequality for $x=2 \lambda v_i$ $\forall i \in \{1, \dots, n\}$ and for $y=2 \lambda \bar{v}$ and sum the terms to obtain

\begin{equation*}
S_{\lambda}   \leqslant   \lambda  \sum_{i=1}^n \left|  v_i - \bar{v}  \right|.
\end{equation*}
The announced result follows from this upper bound on $S_{\lambda}$ in the equation \eqref{S_intro}.
\end{proof}

In the previous proposition, the balance ratio of the $v_i$ was quantified through the absolute values $|v_i - \bar{v}|$. It is natural to consider also the variance to this end and this is the purpose of the next proposition.

\begin{prop}
\label{prop_var}
Let $v_1, \dots, v_n > 0$, we introduce 
\begin{equation*}
\bar{v} = \frac{1}{n} \sum_{i=1}^n v_i \quad \text{and} \quad \Var(v)=\frac{1}{n}  \sum_{i=1}^n \left( v_i - \bar{v} \right) ^2 .
\end{equation*}
 If $X_1, \dots, X_n$ are independent random variables such that $\forall i \in \{1, \dots, n\}$,$\ X_i \sim \mathcal{U}([-v_i,v_i])$ then,
\begin{equation*}
\forall t > 0,\ \mathbb{P}\left( \left| \sum_{i=1}^n X_i \right| \geqslant t \right)  \leqslant   2 \exp \left(\tilde{\psi}(\lambda_0,t) \right).
\end{equation*}
where for any $\lambda, t > 0$
\begin{equation*}
\tilde{\psi}(\lambda,t) =
-\lambda t 
+ \lambda n \bar{v}  
+ n \log \left( \frac{1- e^{-2\lambda \bar{v}}}{2 \lambda \bar{v}} \right) 
+ \frac{n \lambda^2  \Var(v) }{2} 
\end{equation*}
and $\lambda_0$ is such that
\begin{equation*}
\frac{ \partial   \tilde{\psi}(\lambda_0,t)}{ \partial \lambda} = 0
\end{equation*}
\end{prop}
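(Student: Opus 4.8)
The plan is to reproduce the structure of the proof of Proposition~\ref{prop_lipschitz}: I keep the decomposition \eqref{S_intro} of the sum of logarithms occurring in $\phi$ and I bound the imbalance term
\[
S_{\lambda}=\sum_{i=1}^n\bigl(h(2\lambda v_i)-h(2\lambda\bar v)\bigr),\qquad h(x)=\log\!\left(\frac{1-e^{-x}}{x}\right),
\]
but now using second-order information on $h$ rather than its Lipschitz constant. Everything reduces to establishing
\[
S_{\lambda}\ \le\ \frac{n\lambda^{2}\,\Var(v)}{2}\qquad\text{for all }\lambda>0 .
\]
Once this holds, substituting it into \eqref{S_intro} and into the definition \eqref{eq:phi} of $\phi$ gives $\phi(\lambda,t)\le\tilde\psi(\lambda,t)$ for every $\lambda,t>0$; Proposition~\ref{prop2} then yields $\mathbb{P}\!\left(\left|\sum_{i=1}^n X_i\right|\ge t\right)\le 2\inf_{\lambda>0}\exp(\phi(\lambda,t))\le 2\exp(\tilde\psi(\lambda_0,t))$, the last inequality being valid for every $\lambda>0$ and in particular at $\lambda=\lambda_0$.

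The analytic ingredient I would isolate (and, as for the Lipschitz property, defer its elementary verification to the appendix) is that $h$ is twice continuously differentiable on $(0,\infty)$ with
\[
0\ \le\ h''(x)\ =\ \frac{1}{x^{2}}-\frac{1}{4\sinh^{2}(x/2)}\ \le\ \frac14,\qquad x>0 .
\]
The lower bound is immediate from $\sinh(x/2)\ge x/2$, and it shows that $h$ is convex, so $S_{\lambda}\ge 0$ and a genuine variance correction is really needed (plain convexity and Jensen cannot make $S_\lambda$ vanish). For the upper bound it suffices to treat $0<x<2$, since $h''(x)\le 1/x^{2}\le 1/4$ as soon as $x\ge2$; and on $(0,2)$, setting $y=x/2\in(0,1)$, the claim is equivalent to $\sinh y\le y/\sqrt{1-y^{2}}$, which follows by writing $y=\tanh u$ (so that $y/\sqrt{1-y^{2}}=\sinh u$) and using $\tanh u<u$ for $u>0$.

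Granting this bound on $h''$, I would expand $h$ to second order around $2\lambda\bar v$: for each $i$ there is $\xi_i$ between $2\lambda v_i$ and $2\lambda\bar v$ (hence in $(0,\infty)$) such that
\[
h(2\lambda v_i)=h(2\lambda\bar v)+2\lambda(v_i-\bar v)\,h'(2\lambda\bar v)+2\lambda^{2}(v_i-\bar v)^{2}\,h''(\xi_i).
\]
Summing over $i$, the first-order term cancels because $\sum_{i=1}^n(v_i-\bar v)=0$, and bounding each $h''(\xi_i)$ by $1/4$ gives $S_{\lambda}\le\tfrac12\lambda^{2}\sum_{i=1}^n(v_i-\bar v)^{2}=\tfrac12\,n\lambda^{2}\Var(v)$, which is exactly the inequality needed in the first step. (Equivalently, one may note that $x\mapsto h(x)-x^{2}/8$ is concave and apply Jensen's inequality to the points $2\lambda v_i$.) The only real obstacle is the pointwise estimate $h''\le 1/4$, that is, the hyperbolic inequality $\sinh y\le y/\sqrt{1-y^{2}}$ on $(0,1)$; the Taylor step, the cancellation of the linear term, and the substitution back into Proposition~\ref{prop2} are routine.
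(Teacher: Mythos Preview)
Your argument is correct and follows the same route as the paper: bound the imbalance term $S_\lambda$ by $\tfrac{n\lambda^2}{2}\Var(v)$ using second-order control of $h$, then substitute into \eqref{S_intro} and invoke Proposition~\ref{prop2}. Your execution is in fact tighter than the paper's own proof. The paper appeals to Nesterov's descent lemma but states it as $|h(x)-h(y)|\le\tfrac{L}{2}|x-y|^2$, which is false pointwise (the first-order term $h'(y)(x-y)$ is missing) and is attributed to the Lipschitz continuity of $h$ rather than of $h'$; moreover the required constant $L=1/4$ for $h'$ is never established (the appendix only shows $|h'|\le 1/2$). You repair all three issues: you write the genuine second-order Taylor expansion, observe that the linear contributions cancel in the sum because $\sum_i(v_i-\bar v)=0$, and you give a self-contained proof of $0\le h''\le 1/4$ via the hyperbolic inequality $\sinh y\le y/\sqrt{1-y^2}$ on $(0,1)$, which is exactly the statement that $h'$ is $1/4$-Lipschitz.
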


As for Proposition~\ref{prop_lipschitz}, we have a set of tolerance bounds $v_1, \dots, v_n > 0$ and a fixed probability $\rho$ and value $t$ is obtained by inversion with respect to $t$ of the function $\tilde{\psi}$. The proof of Proposition~\ref{prop_var} is as follows.

\begin{proof}
The Lipschitz continuity of $h$ ensures the following inequality (see for example Lemma 1.2.3 in \cite{NESTEROV 13} for a proof of this result):
\begin{equation*}
\forall x, y > 0, \quad
\left| h(x) - h(y) \right| \leqslant \frac{L}{2} \left \Vert x - y \right \Vert^2.
\end{equation*}
This result applied to $x=2 \lambda v_i$ $\forall i \in \{1, \dots, n\}$ and for $y=2 \lambda \bar{v}$  gives 
\begin{equation*}
\forall \lambda, v_1, \dots, v_n > 0, \quad
\left| h(2 \lambda v_i) - h(2 \lambda \bar{v}) \right| \leqslant \frac{\lambda^2}{2}  \Vert  v_i -  \bar{v}  \Vert ^2
\end{equation*}
and finally, since $S_{\lambda}=\sum_{i=1}^n \left( h(2 \lambda v_i) - h(2 \lambda \bar{v}) \right)$, we obtain
\begin{equation*}
 S_{\lambda}   \leqslant   \frac{n \lambda^2  \Var(v) }{2}.
\end{equation*}
The announced result follows from this upper bound on $S_{\lambda}$ in the equation \eqref{S_intro}.
\end{proof}

\section{Applications} \label{appli}

The first part of this section will describe how our upper bound behaves on different stack chains obtained from simulations. The second part will focus on a practical study on an industrial example of tolerance definition within an aircraft assembly.

\subsection{Simulations}

\subsubsection{Tolerance design on an assembly example}
First step is to simulate stack chains. We randomly generate stack chains with a number of inputs $n=5$. Their tolerance intervals values are also randomly generated between 1 and 5. We would like to assess on the output tolerance variability via tolerance intervals to be defined. We generate a stack chain with tolerance inputs intervals bounds and traditional output results as following:

\begin{table}[H]
\caption{Example of a stack chain characterization.}\label{Example of a Stack chain characterization}
\centering
\begin{tabular}{c|c|c|c|c||c|c}
$X_1$  &     $X_2$     &      $X_3$        &          $X_4$   &   $X_5$   & RSS & WC \\ \hline 
$v_1=5$                      & $v_2=4$                      & $v_3=3$                      & $v_4=2$                      & $v_5=1$ & 7.4 & 15 \rule[-7pt]{0pt}{16pt} \\ 
\end{tabular}
\end{table} 

The first approach is based on Monte Carlo methods. We generate $N=10^5$ observations from $n=5$ uniform distributions and we sum it. The probability to be out of a given tolerance interval can therefore be asymptotically estimated and considered as a near theoretical result. The two following methods give an output interval bound according to the two approaches we proposed in this paper. The upper bound we provide depends on the selected confidence level $\rho$. This level is the probability for the output feature to be out of the output tolerance interval we design. The higher the confidence level, the wider the tolerance interval is. Indeed, if we allow more values to be out of tolerance, the output tolerance interval should be broader. Figure~\ref{ex_tol_def} illustrates the results obtained from Monte Carlo draws and from the methods detailed in this paper. Among the three methods, Figure~\ref{ex_tol_def} shows that the Lipschitz and Quadratic approaches give a looser upper bound than the Chernov method.

\begin{figure}[H]
\centering
\includegraphics[scale=0.63]{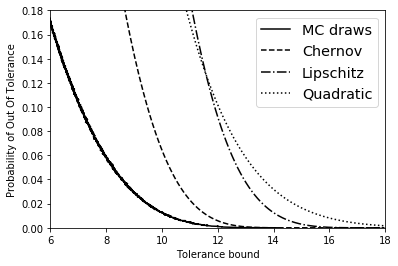}
\caption{Example of the behavior of the probability to be out of tolerance with respect to the value of the output tolerance bound for the discussed approaches.}
\label{ex_tol_def}
\end{figure} 

 The benefits of the methods proposed in this article is that they do not require Monte Carlo draws, nor asymptotic estimation of the probability to be out of tolerance. Indeed, the bounds we provide offer theoretical non asymptotic guarantees and eliminate any risk of rare events that Monte Carlo methods would not generate. Moreover, for large assemblies, the number of Monte Carlo draws needed to obtain a sufficiently sharp result would grow with the number of input features in the assembly. The formula we discuss are closed and directly usable in practice and cheaper to compute than Monte Carlo simulations.

\subsubsection{Influence of the assembly geometry}

In order to represent the balance within a stack chain, previous sections introduced the following term
\begin{equation*}
S_{\lambda} =\sum_{i=1}^n \left( \log \left( \frac{1-e^{-2\lambda v_i}}{2 \lambda v_i} \right) -  \log \left( \frac{1-e^{-2\lambda \bar{v}}}{2 \lambda \bar{v}} \right)  \right).
\end{equation*}
In particular, taking a parameter $\lambda = 1$ leads to
\begin{equation*}
S_1 =\sum_{i=1}^n \left( \log \left( \frac{1-e^{-2 v_i}}{2 v_i} \right) -  \log \left( \frac{1-e^{-2 \bar{v}}}{2 \bar{v}} \right)  \right).
\end{equation*}
This quantity can be used as an indicator of the balance of the stack chain. Indeed, the more balanced the stack chain is, the lower the value is and vice versa.

As we mentioned in the previous part, one of our main issue is to take into account the traditional RSS result and the balance of the stack chain. This explains why hereafter we choose to display the coefficient $f$ with respect to some balance indicator such as $S_1$ or other dispersal measures within input features.

First, in Figure~\ref{fig:link}, we propose to show the results for the coefficient $f$ obtained from a Monte Carlo simulation of uniform distributions with $2 \times 10^5$ drawn observations. Next, we display the coefficient $f$ according to the Chernov methodology as detailed in Proposition~\ref{prop2}. Finally, we show that boundings by Lipschitz and Quadratic approaches directly depend on some balance factor.

\begin{figure}[H]
\centering
\includegraphics[scale=0.62]{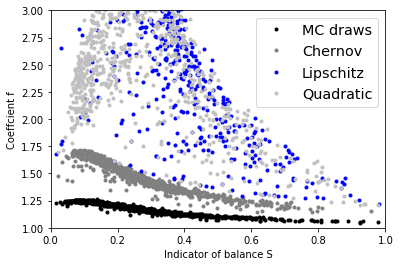}
\caption{Link between the coefficient $f$ and the balance factor $S_1$ with parameter $\rho=0.05$.}
\label{fig:link}
\end{figure} 

We observe an almost linear behavior of the result with respect to the balance factor $S_1$ for the Monte Carlo approach and for the Chernov methodology. As expected and due to the upper bounds defined in these methods, both Lipschitz and Quadratic approaches give results much more conservative. Still we can observe that Quadratic is more accurate for small values of $S_1$. This is explained by the fact that the result with Quadratic approach in Proposition \ref{prop_var} takes into account the variance of input feature bounds. For a small $S_1$, input features are balanced and variance is a more regular control quantity for the structure of the stack chain than the sum of absolute deviations around the mean $\bar{v}$ introduced in Proposition~\ref{prop_lipschitz}.

\subsection{Case study}

In this part, we focus on industrial practices at Airbus. First, we take the example of an assembly from an aircraft and show results from the methodology proposed in this article. Then,  we  detail the common process of tolerance definition at Airbus and we explain how it is related to the approaches presented in the paper. Finally, we represent all stack chains in a real aeronautical product perimeter according to the balance factor.

\subsubsection{An assembly example}

The assembly in Figure~\ref{Ex frame misalignment} is related to a genreic frame misalignment for an Airbus 
aircraft. 

Table~\ref{stack chain} gives the stack chain data of this requirement. Tolerance bounds value have been modified.
It involves 10 input features in the assembly and tolerance data are scaled and unit free. Table~\ref{Res_wc_rss} provides traditional tolerancing worst case and RSS results. The application of the different methods we proposed in this paper gives the results depicted in Figure~\ref{Real case}.

\begin{figure}[H]
\centering
\includegraphics[scale=0.4]{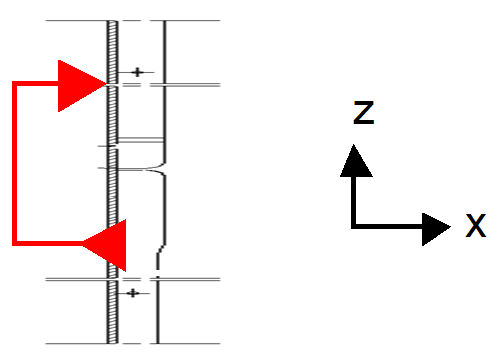}
\caption{Example of vertical frame misalignment with respect to the last rigid point.}
\label{Ex frame misalignment}
\end{figure}

\begin{table}[H]
\caption{Stack chain of the top level requirement: frame misalignment - last rigid point.}\label{stack chain}
\centering
\begin{tabular}{l|c}
\multicolumn{1}{c|}{\textbf{Name of the contributor}}     & \textbf{\begin{tabular}[c]{@{}c@{}}Tolerance\\ interval\end{tabular}} \\ \hline
    Frame 1                               & $\pm 1$                                                                           \\ \hline
    Frame 2                       & $\pm 0.5$                                                                             \\ \hline
    Process tolerance                   & $\pm 0.25$                                                                          \\ \hline
    Process tolerance                                & $\pm 0.23$                                                                          \\ \hline
    Process tolerance                            & $\pm 0.2$                                                                          \\ \hline
    Process tolerance                            & $\pm 0.2$                                                                           \\ \hline
    Process tolerance             & $\pm 0.15$                                                                           \\ \hline
    Process tolerance               & $\pm 0.13$                                                                          \\ \hline
    Process tolerance    & $\pm 0.1$                                                                          \\ \hline
    Process tolerance                           & $\pm 0.09$                                                                          \\
\end{tabular}
\end{table}

\begin{table}[H]
\caption{Result for the top level requirement: frame misalignement - last rigid point.}
\label{Res_wc_rss}
\centering
\begin{tabular}{c|c}
Worst Case result & RSS result    \\ \hline
$\pm 2.85$       & $\pm 1.23$ \\ 
\end{tabular}

\end{table}

\begin{figure}[H]
\centering
\includegraphics[scale=0.63]{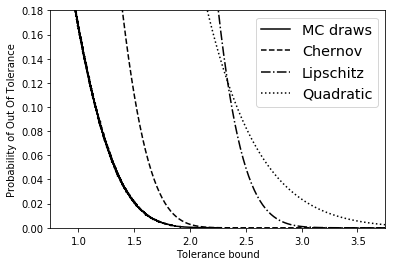}
\caption{Real case study of a bound value according to the confidence level.}
\label{Real case}
\end{figure}

\subsubsection{Industrial practices : Airbus}

One of the methods used by Airbus to define a tolerance in the design phase is based on Monte Carlo simulation data and a disproportion parameter. As for the Gaussian case, quantile at $0.27\%$ are observed on Monte Carlo simulations and a linear regression with respect to the factor parameter is carried out to obtain the result. For a set of tolerance bound $v_1, \dots, v_n > 0$ for an input features balance ratio $D$, this rule gives an output feature tolerance interval $[-T_{Airbus}, T_{Airbus}]$ defined as : 
\begin{equation}
T_{Airbus} = 1.6  \times (-0.56  D + 1.04 ) \times T_{RSS}
\label{Airbus_rule}
\end{equation}
with $T_{RSS}$ as defined in previous parts and 
\begin{equation*}
\forall  v_1, \dots, v_n > 0, \quad
D = \frac{\max_{i}(v_i) - \bar{v}}{\sum_{i=1}^n v_i} .
\end{equation*}

This $D$ factor measures how far from the mean is the main contributor of the stack chain and has the advantage of being understandable. This quantity is highly correlated to the term $S_1$ previously introduced as we can see on the Figure~\ref{cor} :

\begin{figure}[H]
\centering
\includegraphics[scale=0.63]{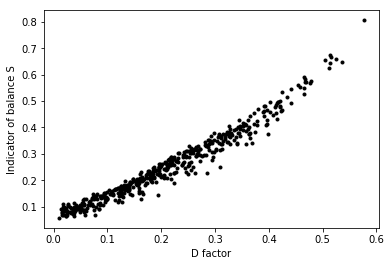}
\caption{Correlation between the term $S_1$ and the balance factor $D$.}
\label{cor}
\end{figure} 

With this definition, a high $D$ still implies an unbalanced stack chain. Conversely, a small value of $B$ means a balanced between stack chain inputs. Figure~\ref{D expl} shows a few examples of this factor with the respect to the stack chain structure.

\begin{figure}[H]
\centering
\includegraphics[scale=0.63]{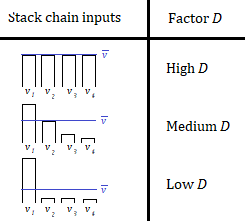}
\caption{Stack chain structure and balance factor $D$.}
\label{D expl}
\end{figure} 

In the industrial example of the frame misalignment. We set the probability to be out of tolerance for the output feature at $\rho = 0.0027$, which corresponds to the acceptable $0.27\%$ out of the interval from the $6\sigma$ methodology. Table~\ref{tol result} summarizes the tolerance interval obtained for the frame misalignment. Three results are displayed : the Monte Carlo approach with $200000$ drawn observations for each input feature, the Chernov approach proposed in this article with $\rho=0.0027$ and the industrial practice presented in \eqref{Airbus_rule}.

\begin{table}[H]
\caption{Tolerance interval results according to the different approaches.}\label{tol result}
\centering
\begin{tabular}{c|c|c|c}
\textbf{Method}  & \begin{tabular}[c]{@{}c@{}}Monte Carlo\\ $\rho=0.27\%$\end{tabular} & \begin{tabular}[c]{@{}c@{}}Chernov\\ $\rho=0.27\%$\end{tabular} & \begin{tabular}[c]{@{}c@{}}Industrial practice\\ no $\rho$ selection\end{tabular} \\ \hline
\textbf{\begin{tabular}[c]{@{}c@{}}Tolerance\\ interval\end{tabular}} & $\pm3.56$mm & $\pm4.01$mm    & $\pm3.53$mm                                                           \\ 
\end{tabular}
\end{table}

We can see that for a level $\rho=0.27\%$, the result from the industrial rule is very close to the value observed on Monte Carlo simulations. The Chernov method gives a more conservative result but ensures a precise probability $\rho$ to be out of the interval for the output feature.

\subsubsection{Performance of the different approaches on industrial cases}
Focusing on a real sample of aeronautical assemblies, all stack chains have been analyzed in order to obtain the $f$ coefficient according to the different methodologies.

\begin{figure}[H]
\centering
\includegraphics[scale=0.63]{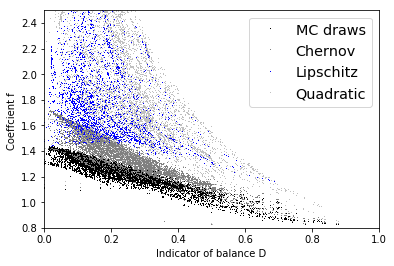}
\caption{Link between the term introduced and the balance factor $D$.}
\label{Link between the term introduced and the balance factor $D$}
\end{figure} 

We retrieve the same trend that for simulated data : Better result for Chernov approach and linearity in $D$.

\section{Conclusion}
We proposed robust approaches for tolerance definition in the design phase allowing the management of confidence level. From known input tolerance intervals of an assembly and for a selected confidence level, we can determine the output tolerance interval. The result will be robust against poor or unknown industrial capabilities because uniform distributions on tolerance intervals are assumed for input features.\\

The Chernov method is particularly accurate and gives an output tolerance interval result close to the reality, tight enough to be industrially relevant, and ensures also the selected probability as confidence level. We also provided a balance factor which is strongly related to how tight an interval should be according to the disproportion of the stack chain. We obtained an almost linear behavior of our result from the Chernov methodology with respect to this balance factor.\\

Future directions of this work would be to considere more adversarial distributions for input features. For instance, bimodal distributions or truncated distributions could be studied in order to hedge against industrial practices with the induced bias of machine or thrust effect.

\section*{Proof of Lipschitz continuity of function $h$}

We start by recalling the definition of Lipschitz continuity of a function $f: \mathbb{R} \to \mathbb{R}$. Let $L > 0$, if the function is such that
\begin{equation*}
\forall x, y \in \mathbb{R},\ \left| f(x) - f(y) \right| \leqslant L \left| x - y\right|
\end{equation*}
then, $f$ is said to be Lipschitz continuous with constant $L$. In this appendix, we propose to prove the previously claimed Lipschitz continuity with constant $L=1/2$ of function $h$ defined by
\begin{equation*}
\forall x >0, \quad  h(x)= \log \left( \frac{1-e^{-x}}{x} \right).
\end{equation*}

The first and second derivative functions of $h$ are easily obtained as
\begin{equation*}
\forall x >0, \quad  h'(x)= \frac{e^{-x}}{1 - e^{-x}} - \frac{1}{x}
\end{equation*}
and
\begin{equation*}
\forall x >0, \quad  h''(x)=  \frac{1 + (1+x^2)e^{-x}}{x^2(1-e^{-x})^2}.
\end{equation*}
Since $h''(x) \geqslant 0$ for any $x > 0$, then $h'$ is an non decreasing function. Moreover, $h'(x)$ tends to $-1/2$ when $x \to 0^+$ and to $0$ when $x \to +\infty$. Thus, we conclude that $\left|  h'(x) \right| \leqslant \frac{1}{2}$ and finally that $h$ is Lipschitz continuous with $L=1/2$ by a straightforward integration argument.

\section*{Acknowledgment}
Industrial application detailed in the article is the topic of a patent application FR 15508 (patent pending n\textsuperscript{o}1912668).

\vspace{\baselineskip}

\noindent This work was partly supported by the French \textit{Agence Nationale de la Recherche} through a CIFRE contract 2017/1390 in partnership with Airbus.

\end{document}